\begin{document}
\begin{center} {\Large\bf Optimal Hamiltonian of Fermion Flows}
\vskip 1cm {\bf   Luigi Accardi$^{1}$, Andreas Boukas$^{2}$}\\
\
\\$^{1}$ Centro Vito Volterra\\
Universit\`{a} di Roma Tor Vergata \\
via Columbia, 2-- 00133 Roma, Italia\\
e-mail: accardi@Volterra.mat.uniroma2.it\\
\
\\$^{2}$ Department of Mathematics and Natural Sciences\\
American College of Greece\\
Aghia Paraskevi, Athens 15342, Greece\\
e-mail: andreasboukas@acgmail.gr\\
\end{center}

\begin{abstract}
After providing a general formulation of Fermion flows within the context of Hudson-Parthasarathy quantum stochastic calculus, we consider the problem of determining the noise coefficients of the Hamiltonian associated with a Fermion flow so as to minimize a naturally associated quadratic performance functional. This extends to Fermion flows results  of the authors previously obtained for Boson flows .

\end{abstract}
\numberwithin{equation}{section}
\newtheorem{theorem}{Theorem}
\newtheorem{lemma}{Lemma}
\newtheorem{proposition}{Proposition}
\newtheorem{corollary}{Corollary}
\newtheorem{example}{Example}
\newtheorem{algorithm}{Algorithm}
\newtheorem*{main}{Main~Theorem}
\newtheorem{notation}{Notation}
\newtheorem{remark}{Remark}
\newtheorem{definition}{Definition}

fermion.tex

\section{\textbf{Introduction}}

In the Heisenberg picture of quantum mechanics, the time-evolution of an observable $X$ is described  by the operator process  $j_t(X)=U^*(t)\,X\,U(t)$ where $U(t)=e^{-itH}$ and the Hamiltonian $H$ is a self-adjoint operator on the wave function Hilbert space. The unitary process $U(t)$ and the self-adjoint process $j_t(X)$ satisfy, respectively,

\begin{eqnarray*}
dU(t)=-iH\,U(t)\,dt,\,\,U(0)=I
\end{eqnarray*}

and 

\begin{eqnarray*}
dj_t(X)=i[H,j_t(X)]\,dt,\,\,j_0(X)=X
\end{eqnarray*}

where 

\begin{eqnarray*}
[H,j_t(X)]:=H\,j_t(X)-j_t(X)\,H
\end{eqnarray*}

In the presence of quantum noise, the equation satisfied by  $U(t)$ is replaced by a Hudson-Parthasarathy quantum stochastic differential equation  driven by that noise (see \cite{12} and \cite{14}) and the corresponding equation for the quantum flow  $j_t(X)$  is interpreted as the  Heisenberg picture of the Schr\"{o}dinger equation in the presence of noise. The emergence of such equations as stochastic limits of classical Schr\"{o}dinger equations is described in \cite{6a}. From the point of view of quantum control theory, the problem of minimizing a quadratic performance functional associated with a Hudson-Parthasarathy quantum flow driven by Boson quantum noise, has been considered in \cite{1}, 
\cite{4}-\cite{6} and \cite{7}-\cite{11}.  In this paper we consider the same problem for quantum flows driven by Fermion flows. Our approach is based on the representation of the Fermion commutation relations on  Boson Fock space obtained in \cite{0} and \cite{13}. A unified approach  to the quadratic cost control of quantum processes driven by Boson, Fermion, Finite-Difference and a wide class of other quantum noises,  can be found in \cite{2} with the use of the representation free quantum stochastic calculus of \cite{6b}.

This paper is structured as follows: In section 2 we describe the main features of Hudson-Parthasarathy calculus and the representation of the Fermion commutation relations on  Boson Fock space. In section 3 we obtain the quantum stochastic  differential equations satisfied by Fermion flows. In section 4 we obtain a new algebraic formulation of Fermion evolution equations and  flows and we describe the equations satisfied by their structure maps. Finally, in section 5 we define the quadratic performance functionals associated with Fermion flows and we obtain the coefficients of the corresponding Hamiltonian for which these functionals are minimized. 

\section{Quantum Stochastic Calculus }
The Boson Fock space $\Gamma:=\Gamma(L^2(\mathbb{R_+},\mathbb{C}))$
over $L^2(\mathbb{R_+},\mathbb{C}) $  is the Hilbert space
completion of the linear span of the exponential vectors  $\psi(f)$
under the inner product
\begin{eqnarray*}
<\psi(f),\psi(g)>:=e^{<f,g>}
\end{eqnarray*}where
$f,g \in L^2(\mathbb{R_+},\mathbb{C})$ and
$<f,g>=\int_0^{+\infty}\,\bar{f}(s)\,g(s)\,ds$ where, here and in
what follows,  $\bar{z}$ denotes the complex conjugate of $z\in
\mathbb{C}$.  The
 annihilation, creation and conservation operator processes $A_t$, $A^{\dagger}_t$ and $\Lambda_t$ respectively, are defined on the exponential vectors $\psi(g)$ of $\Gamma$ by\begin{eqnarray*}A_t\psi(g)&:=&\int_0^t\,g(s)\,ds\,\,\psi(g)\\A^{\dagger}_t\psi(g)&:=&\frac{\partial}{\partial \epsilon}|_{\epsilon=0}\,\psi(g+\epsilon \chi_{[0,t]})\\\Lambda_t\psi(g)&:=&\frac{\partial}{\partial \epsilon}|_{\epsilon=0}\,\psi(e^{\epsilon \chi_{[0,t]})}g).\end{eqnarray*}The basic   quantum stochastic differentials $dA_t$, $dA^{\dagger}_t$, and $d\Lambda_t$ are defined by\begin{eqnarray*}&dA_t:=A_{t+dt}-A_t& \\&dA^{\dagger}_t: = A^{\dagger}_{t+dt}-A^{\dagger}_{t}  &\\&d \Lambda_t:= \Lambda_{t+dt} -\Lambda_t .&\end{eqnarray*}Hudson and Parthasarathy defined in \cite{12} stochastic integration with respect to the noise differentials $dA_t$, $dA^{\dagger}_t$ and $d \Lambda_t$ and obtained the  It\^{o} multiplication
 table\begin{center}
 \begin{tabular}
 {c|cccc}$\cdot$&$dA_t^{\dagger}$&$d\Lambda_t$&$dA_t$&$dt$\\\hline$dA_t^{\dagger}$&$0$&$0$&$0$&$0$\\$d\Lambda_t$&$dA_t^{\dagger}$&$d\Lambda_t$&$0$&$0$\\$dA_t$&$dt$&$dA_t$&$0$&$0$\\$dt$&$0$&$0$&$0$&$0$.\end{tabular}\end{center}We couple $\Gamma$ with a "system" Hilbert space $\mathcal{H}$ and consider processes defined on $\mathcal{H} \otimes \Gamma$. The fundamental theorems of the Hudson-Parthasarathy quantum stochastic calculus  give formulas for expressing the matrix elements of quantum stochastic integrals  in terms of ordinary Riemann-Lebesgue integrals. \begin{theorem}Let \begin{eqnarray*}&M(t):=\int_0^t\,E(s)\,d\Lambda_s+F(s)\,dA_s+G(s)\,dA^{\dagger}_s+H(s)\,ds&\end{eqnarray*}where $E$, $F$, $G$, $H$ are (in general) time dependent adapted processes. Let also $u\otimes \psi (f)$ and $u \otimes \psi (g)$ be in the "exponential domain" of $\mathcal{H} \otimes \Gamma$. Then\begin{eqnarray*}&<u\otimes \psi (f),M(t)\,  u \otimes \psi
 (g)>=&\\&\int_0^t <u \otimes \psi (f),\left(\bar{f}(s)\,g(s) \,E(s)+g(s)\,F(s)+\bar{f}(s)\,G(s)+H(s)\right) u \otimes \psi (g)>\,ds.&\nonumber\end{eqnarray*}\end{theorem}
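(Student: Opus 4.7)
The plan is to reduce the general adapted integrands to the simple (step) case, where the quantum stochastic integral is a finite sum and the matrix elements can be computed directly from the action of the basic processes on exponential vectors; then recover the general formula by a density argument.

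First, I would approximate $E$, $F$, $G$, $H$ by simple adapted processes: fix a partition $0=t_0<t_1<\cdots<t_n=t$ and replace each process by a piecewise constant process taking the operator value at the left endpoint of each subinterval. For a simple process the integral $M(t)$ becomes a finite sum of terms of the form $E(t_k)(\Lambda_{t_{k+1}}-\Lambda_{t_k})$, $F(t_k)(A_{t_{k+1}}-A_{t_k})$, $G(t_k)(A^{\dagger}_{t_{k+1}}-A^{\dagger}_{t_k})$ and $H(t_k)(t_{k+1}-t_k)$. By adaptedness, the operator $E(t_k)$ (and similarly $F,G,H$) leaves the tensor factor $\mathcal{H}\otimes\Gamma_{[0,t_k]}$ invariant and commutes with the noise increments over $[t_k,t_{k+1}]$, so the matrix element between $u\otimes\psi(f)$ and $u\otimes\psi(g)$ factors, with the noise-increment part acting on exponential vectors in the forward Fock space.

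Next I would compute these noise-increment matrix elements directly from the definitions of $A_t,A^{\dagger}_t,\Lambda_t$ on exponential vectors. Differentiating the defining formulas at $\varepsilon=0$ and using $\langle\psi(f),\psi(g)\rangle=e^{\langle f,g\rangle}$, one obtains
\begin{eqnarray*}
\langle\psi(f),(A_{t_{k+1}}-A_{t_k})\psi(g)\rangle &=& \Bigl(\int_{t_k}^{t_{k+1}} g(s)\,ds\Bigr)\,\langle\psi(f),\psi(g)\rangle,\\
\langle\psi(f),(A^{\dagger}_{t_{k+1}}-A^{\dagger}_{t_k})\psi(g)\rangle &=& \Bigl(\int_{t_k}^{t_{k+1}} \bar{f}(s)\,ds\Bigr)\,\langle\psi(f),\psi(g)\rangle,\\
\langle\psi(f),(\Lambda_{t_{k+1}}-\Lambda_{t_k})\psi(g)\rangle &=& \Bigl(\int_{t_k}^{t_{k+1}} \bar{f}(s)g(s)\,ds\Bigr)\,\langle\psi(f),\psi(g)\rangle.
\end{eqnarray*}
Substituting these back into the matrix element of $M(t)$ and factoring the common exponential inner product produces, term by term, a Riemann sum for the coefficient functions $\bar{f}(s)g(s)E(s)+g(s)F(s)+\bar{f}(s)G(s)+H(s)$ paired inside $\langle u\otimes\psi(f),\cdot\,u\otimes\psi(g)\rangle$. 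Refining the partition, the Riemann sums converge to the ordinary Lebesgue integral asserted in the statement.

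Finally, I would pass from simple to general adapted processes by an approximation argument: for integrands satisfying the usual square-integrability conditions on the exponential domain, the Hudson--Parthasarathy first fundamental estimate (or, equivalently, the Itô-table based isometry/continuity bound) guarantees that simple approximants of $E,F,G,H$ yield $M(t)$'s whose matrix elements with $u\otimes\psi(f)$, $u\otimes\psi(g)$ converge to those of the limiting integral, while on the right-hand side the Riemann--Lebesgue integrals converge by dominated convergence. The main technical obstacle is controlling the conservation term $E(s)\,d\Lambda_s$, since $\Lambda_t$ is unbounded and its matrix elements on exponential vectors involve the product $\bar{f}(s)g(s)$; this forces one to work on the exponential domain and to use the quantum Itô isometry in the mixed form that accommodates $\Lambda$-integrals, rather than a naive $L^2$ bound. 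Once this step is secured, the formula for general adapted integrands follows immediately.
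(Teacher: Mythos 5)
Your outline is correct, and it is essentially the argument behind the result the paper invokes: the paper gives no proof of its own here, deferring entirely to Theorem 4.1 of Hudson--Parthasarathy, whose proof proceeds exactly as you describe (reduction to simple adapted integrands, factorization via adaptedness, the three matrix elements of the increments on exponential vectors, and a limiting estimate to handle general integrands, with the $\Lambda$-term requiring the exponential domain). Your computed increment matrix elements agree with the definitions of $A_t$, $A_t^{\dagger}$, $\Lambda_t$ given in the paper, so nothing further is needed.
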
 \begin{proof}See Theorem 4.1 of \cite{12}.\end{proof}\begin{theorem}Let \begin{eqnarray*}&M(t):=\int_0^t\,E(s)\,d\Lambda_s+F(s)\,dA_s+G(s)\,dA^{\dagger}_s+H(s)\,ds&\end{eqnarray*}and\begin{eqnarray*}&M^{\prime}(t):=\int_0^t\,E^{\prime}(s)\,d\Lambda_s+F^{\prime}(s)\,dA_s+G^{\prime}(s)\,dA^{\dagger}_s+H^{\prime}(s)\,ds&\end{eqnarray*}where $E$, $F$, $G$, $H$, $E^{\prime}$, $F^{\prime}$, $G^{\prime}$, $H^{\prime}$ are (in general) time dependent adapted processes. Let also $u\otimes \psi (f)$ and $u \otimes \psi (g)$ be in the exponential domain of $\mathcal{H} \otimes \Gamma$. Then\begin{eqnarray*}&<M(t) \,u \otimes \psi (f),M^{\prime} (t)\, u \otimes \psi (g)>=&\\&\int_0^t \{ <M(s)\,u \otimes \psi (f),\left( \bar{f}(s)\,g(s) E^{\prime} (s)+g(s)\,F^{\prime} (s)+\bar{f}(s)\,G^{\prime} (s)+H^{\prime} (s) \right) u
 \otimes \psi (g)>&\nonumber\\&  + < \left( \bar{g}(s)\,f(s)\, E(s) + f(s)\,F(s) +\bar{g}(s)\, G(s)+H(s) \right) u \otimes \psi (f),  M^{\prime} (s)\,u \otimes \psi (g)>&\nonumber\\&+   < \left( f(s) E(s)+G(s)   \right) u \otimes \psi (f),  \left( g(s) E^{\prime} (s)+G^{\prime} (s)   \right)  u \otimes \psi (g)>\}\, ds .&\nonumber\end{eqnarray*}\end{theorem}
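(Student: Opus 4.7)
The plan is to show that both sides of the claimed identity vanish at $t=0$ and have the same derivative in $t$. Since the right-hand side is by construction the indefinite integral of the stated integrand, this reduces to establishing the quantum It\^o product rule
\begin{equation*}
d\langle M(t)\xi,M^\prime(t)\eta\rangle=\langle dM(t)\,\xi,M^\prime(t)\eta\rangle+\langle M(t)\xi,dM^\prime(t)\,\eta\rangle+\langle dM(t)\,\xi,dM^\prime(t)\,\eta\rangle
\end{equation*}
(writing $\xi=u\otimes\psi(f)$ and $\eta=u\otimes\psi(g)$) and identifying its three terms with the three lines appearing in the statement.

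For the first term I would extend Theorem~1 to the situation where the left vector is the adapted vector-valued process $M(s)\xi$. The matrix-element rules encoded in Theorem~1 say that, once paired with an exponential vector carrying symbol $f$ on the left, the differentials $d\Lambda_s,dA_s,dA_s^\dagger,ds$ applied to $\psi(g)$ on the right contribute the scalar factors $\bar f(s)g(s)\,ds$, $g(s)\,ds$, $\bar f(s)\,ds$ and $ds$. Applying this term by term to $dM^\prime(s)\eta$ produces exactly the integrand $\bar f(s)g(s)E^\prime(s)+g(s)F^\prime(s)+\bar f(s)G^\prime(s)+H^\prime(s)$ of line one. Line two is obtained symmetrically, by interchanging the roles of $(M,f)$ and $(M^\prime,g)$ and invoking the adjoint relations $(dA)^*=dA^\dagger$, $(dA^\dagger)^*=dA$, $(d\Lambda)^*=d\Lambda$.

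Line three, the quantum It\^o correction, arises from $(dM(s))^*dM^\prime(s)$ evaluated against the It\^o multiplication table of Section~2. Of the sixteen formal products only the four entries $d\Lambda\cdot d\Lambda=d\Lambda$, $d\Lambda\cdot dA^\dagger=dA^\dagger$, $dA\cdot d\Lambda=dA$ and $dA\cdot dA^\dagger=dt$ are nonzero; they yield the coefficient combinations $E(s)^*E^\prime(s)$, $E(s)^*G^\prime(s)$, $G(s)^*E^\prime(s)$ and $G(s)^*G^\prime(s)$, weighted respectively by the exponential-vector factors $\bar f(s)g(s)$, $\bar f(s)$, $g(s)$ and $1$. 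Summing and refactoring gives precisely
\begin{equation*}
\langle(f(s)E(s)+G(s))u\otimes\psi(f),(g(s)E^\prime(s)+G^\prime(s))u\otimes\psi(g)\rangle,
\end{equation*}
which is the third line of the claim.

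The main obstacle is the analytic justification of the formal It\^o product rule, since neither $M(t)\xi$ nor $M^\prime(t)\eta$ is $t$-differentiable in the classical sense and neither is an exponential vector, so Theorem~1 does not apply off the shelf to the mixed brackets in which they occur. Following \cite{12}, I would first prove the identity for simple (piecewise-constant) adapted integrands refined to a common partition, on which both sides collapse to finite sums of explicit matrix elements of $dA,dA^\dagger,d\Lambda,ds$ between exponential vectors that can be computed directly; the general case then follows by the same density and continuity arguments on the exponential domain that undergird Theorem~1.
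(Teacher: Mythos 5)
Your plan is correct: the identification of the three lines with $\langle M\xi,dM'\eta\rangle$, $\langle dM\xi,M'\eta\rangle$ and the It\^o correction $\langle dM\xi,dM'\eta\rangle$ is right, the four surviving products from the multiplication table and their exponential-vector weights $\bar f g,\ \bar f,\ g,\ 1$ are exactly what refactor into the third line, and the passage through simple adapted integrands plus density on the exponential domain is the standard analytic justification. The paper itself offers no proof beyond citing Theorem~4.3 of Hudson--Parthasarathy, and your outline is a faithful reconstruction of that cited argument, so there is nothing to contrast.
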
 \begin{proof}See Theorem 4.3 of \cite{12}.\end{proof}The connection between classical and quantum stochastic analysis is given in the following:\begin{theorem} The processes $B=\{B_t\,,\,t \geq 0\}$ and $P=\{P_t\,,\,t \geq 0\}$ defined by\begin{eqnarray*}&B_t:=A_t+A_t^{\dagger}&\end{eqnarray*}and\begin{eqnarray*}&P_t:=\Lambda_t+\sqrt{\lambda}\,(A_t+A_t^{\dagger})+\lambda \,t&\end{eqnarray*}are identified with Brownian motion and  Poisson process of intensity $\lambda$ respectively, in the sense that  their vacuum characteristic functionals are given by\begin{eqnarray*}&<\psi (0),e^{i\,s\,B_t}\,\psi (0)>= e^{ - \frac{s^2}{ 2}\,t }
 &\end{eqnarray*}
 and
 \begin{eqnarray*}
 &<\psi (0),e^{i\,s\,P_t}\,\psi (0)>=e^{ \lambda\,\left(e^{i\,s}-1\right)\,t }.&
 \end{eqnarray*}
 \end{theorem}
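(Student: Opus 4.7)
The plan is to convert each vacuum characteristic functional into a first-order linear ODE in $t$ whose coefficient is dictated entirely by the quantum It\^{o} multiplication table and by the vanishing of the fundamental noise differentials on the vacuum. Concretely, for each self-adjoint process $X_t\in\{B_t,P_t\}$ I would derive a QSDE of the form $de^{isX_t}=e^{isX_t}\,dR_t$, where $dR_t$ is a linear combination of the four fundamental differentials, then pair with $\psi(0)\otimes\psi(0)$ and apply Theorem~1 with $u=1$ and $f=g=0$, so that only the $ds$ component of $dR_t$ survives.

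For $B_t$, the canonical commutation relations give $[B_u,B_v]=0$ for all $u,v\ge 0$, so $(e^{isB_t})_{t\ge 0}$ is a well-defined commuting family arising from the joint functional calculus of the $B_t$'s. The It\^{o} table yields $(dB_t)^2=dA_t\,dA_t^{\dagger}=dt$ and $(dB_t)^k=0$ for $k\ge 3$, hence
\begin{eqnarray*}
d\,e^{isB_t}\;=\;e^{isB_t}\,\bigl(is\,dB_t-\tfrac{s^2}{2}\,dt\bigr).
\end{eqnarray*}
Taking the vacuum matrix element and invoking Theorem~1 kills the $dA_t$ and $dA_t^{\dagger}$ contributions, leaving $\phi'(t)=-\tfrac{s^2}{2}\,\phi(t)$ with $\phi(0)=1$, whose unique solution is $\phi(t)=e^{-s^2 t/2}$.

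For $P_t$, the decisive input is the idempotent identity $(dP_t)^2=dP_t$, which follows by a direct expansion: of the sixteen products among $d\Lambda_t,\sqrt{\lambda}\,dA_t,\sqrt{\lambda}\,dA_t^{\dagger},\lambda\,dt$, only four are nonzero, and those four reproduce exactly the summands of $dP_t$ itself. Consequently $(dP_t)^k=dP_t$ for every $k\ge 1$ and the Taylor series for $e^{is\,dP_t}$ collapses to $I+(e^{is}-1)\,dP_t$, giving
\begin{eqnarray*}
d\,e^{isP_t}\;=\;(e^{is}-1)\,e^{isP_t}\,dP_t.
\end{eqnarray*}
Taking vacuum expectation via Theorem~1 retains only the scalar coefficient $\lambda\,ds$, so $\phi'(t)=\lambda(e^{is}-1)\,\phi(t)$ with $\phi(0)=1$, yielding $\phi(t)=\exp\bigl(\lambda(e^{is}-1)\,t\bigr)$.

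The step requiring the most care is the rigorous justification of these two exponential QSDEs. For $B_t$ this is essentially the classical one-dimensional It\^{o} formula, thanks to the commutativity of the family. For $P_t$, however, the $\Lambda$-component obstructs commutativity of $P_s$ and $P_t$ at different times, so one cannot simply compose exponentials. The cleanest rigorous route is to \emph{define} $U_t^{(s)}$ as the unique solution of the Hudson--Parthasarathy QSDE $dU_t^{(s)}=(e^{is}-1)\,U_t^{(s)}\,dP_t$ with $U_0^{(s)}=I$, verify via Theorem~2 that $U_t^{(s)}$ is unitary and strongly continuous in $s$ with group law $U_t^{(s+s')}=U_t^{(s)}U_t^{(s')}$, and then identify $U_t^{(s)}=e^{isP_t}$ by Stone's theorem, the generator being $P_t$. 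With that identification in hand the ODE argument above applies without change.
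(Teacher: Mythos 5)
Your argument is essentially correct, and it supplies the proof that the paper itself omits: the paper's ``proof'' of this theorem is only the citation ``See Theorem 5 of \cite{11}'', and the standard argument given there (and in the original Hudson--Parthasarathy paper \cite{12}) is exactly your strategy of turning the vacuum characteristic functional into a scalar ODE via the It\^{o} table and the first fundamental theorem, with the Poisson case handled through the unitary cocycle $U_t^{(s)}$ solving $dU_t^{(s)}=(e^{is}-1)\,U_t^{(s)}\,dP_t$ and Stone's theorem. Your computations $(dB_t)^2=dt$, $(dB_t)^k=0$ for $k\geq 3$, and $(dP_t)^2=dP_t$ are all right, and pairing against $\psi(0)$ with $f=g=0$ correctly isolates the $dt$ coefficient in each case.

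One claim in your write-up is false, though harmlessly so: the family $\{P_t\}$ \emph{is} commutative. Writing $B_t=A_t+A_t^{\dagger}$, one has $[\Lambda_u,B_v]=-A_{\min(u,v)}+A^{\dagger}_{\min(u,v)}=[\Lambda_v,B_u]$, so the two cross terms $\sqrt{\lambda}\,[\Lambda_u,B_v]+\sqrt{\lambda}\,[B_u,\Lambda_v]$ cancel and $[P_u,P_v]=0$ for all $u,v$ --- as must be the case if $P$ is to be identified with a classical process at all. So the $\Lambda$-component is not an obstruction; the genuine technical issue your Stone's-theorem detour addresses is rather that $B_t$ and $P_t$ are unbounded, so formal commutativity on the exponential domain does not by itself license the joint functional calculus (strong commutativity of spectral resolutions is needed), and the same caveat applies to your treatment of $B_t$, where you invoke the ``classical one-dimensional It\^{o} formula'' for the commuting family without this justification. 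The cocycle construction resolves this cleanly for both processes, so I would apply it to $B_t$ as well rather than only to $P_t$.
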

 \begin{proof} See Theorem 5 of \cite{11}.
 \end{proof}
 The processes $A_t$, $A^{\dagger}_t$  satisfy the Boson  Commutation Relations
 \begin{eqnarray*}&\left[A_t,A^{\dagger}_t\right]:=A_t\,A^{\dagger}_t-A^{\dagger}_t\,A_t=t\,I.&
 \end{eqnarray*}In \cite{13} Hudson and Parthasarathy showed that the processes $F_t$ and $F^{\dagger}_t$ defined on the Boson Fock space by\begin{eqnarray*}F_t&:=&\int_0^t\,J_s\,dA_s   \\F^{\dagger}_t&:=&\int_0^t\,J_s\,dA_s^{\dagger}\end{eqnarray*}satisfy the Fermion anti-commutation relations\begin{eqnarray*}&\{F_t,F^{\dagger}_t\}:=F_t\,F^{\dagger}_t+F^{\dagger}_t\,F_t=t\,I.&\end{eqnarray*}It follows that\begin{eqnarray}dF_t&=&J_t\,dA_t\label{1}\\dF^{\dagger}_t&=&J_t\,dA_t^{\dagger}\label{2} .\end{eqnarray}Here $J_t$ is the self-adjoint, unitary-valued, adapted, so called "reflection" process, acting on the noise part
 of the Fock space and extended as the identity on the system part, defined by
 \begin{eqnarray*}
 J_t:=\gamma(-P_{[0,t]}+P_{(t,+\infty})
 \end{eqnarray*}
 where $P_S$ denotes the multiplication operator by $\chi_S$ and $\gamma$ is the second quantization operator defined by
 \begin{eqnarray*}
 \gamma(U)\,\psi(f):=\psi(U\,f).
 \end{eqnarray*}
 The reflection process  $J_t$ commutes with system space operators and satisfies the differential
 equation (cf. Lemma 3.1 of \cite{13})
 \begin{eqnarray}dJ_t&=&-2\,J_t\,d\Lambda_t\label{3}\\J_0&=&1.\nonumber
 \end{eqnarray}
 \section{Fermion Evolutions and Flows}
 As shown in \cite{13}, see also \cite{0},  Fermion unitary evolution equations have the form
 \begin{eqnarray}
 dU_t=-\left(\left(iH+\frac{1}{2}\,L^*L\right)\,dt+ L^* \,W\,dF_t -L\, dF_t^{\dagger}+\left(1-W\right)\,d\Lambda_t\right)U_t\label{4}\end{eqnarray}with adjoint\begin{eqnarray}dU_t^*=-U_t^*\,((-iH+\frac{1}{2}\,L^*L)\,dt- L^* dF_t +W^*\,L\, dF_t^{\dagger}+(1-W^*)\,d\Lambda_t)\label{5}
  \end{eqnarray}
 where, $U_0=U^*_0=1$ and, for each $t\geq0$, $U_t$ is a unitary operator defined on the tensor product
 $\mathcal{H} \otimes \Gamma(L^2(\mathbb{R}_+,\mathcal{ \mathbb{C}  }))$ of the system Hilbert space $\mathcal{H} $ and
 the noise Fock space $\Gamma$.  Here $H$, $L$, $W$ are in $\mathcal{B}(\mathcal{H})$, the space of bounded linear
 operators on $\mathcal{H} $, with $W$ unitary and $H$ self-adjoint.   We identify time-independent, bounded, system
 space operators $X$ with their ampliation $X \otimes 1$ to
 $\mathcal{H} \otimes \Gamma(L^2(\mathbb{R}_+,\mathcal{\mathbb{C}}))$.
 \begin{proposition}\label{prop1}
 Let \begin{eqnarray*}
 \phi_t(T,S):=V_t^*\,(T+S\,J_t)\,V_t
 \end{eqnarray*}
 where $T,S$ are bounded system space operators and $V_t,V_t^*$ satisfy the quantum
 stochastic differential equations
 \begin{eqnarray*}
 dV_t&=&\left(\alpha\,dt+ \beta\,dF_t +\gamma\, dF_t^{\dagger}+\delta\,d\Lambda_t\right)V_t \\
&=&\left(\alpha\,dt+ \beta\,J_t\,dA_t +\gamma\,J_t\,
 dA_t^{\dagger}+\delta\,d\Lambda_t\right)V_t
 \end{eqnarray*}
 and
 \begin{eqnarray*}
 dV_t^*&=&V_t^*\left(\alpha^*\,dt+ \beta^*\,dF_t^{\dagger} +\gamma^*\, dF_t
 + \delta^*\,d\Lambda_t\right)\\
&=&V_t^*\left(\alpha^*\,dt+ \beta^*\,J_t\,dA_t^{\dagger}
 +\gamma^*\,J_t\, dA_t+\delta^*\,d\Lambda_t\right)
 \end{eqnarray*}
 where $\alpha,\beta,\gamma,\delta$ are bounded system space operators. Then
 \begin{eqnarray}
 d\phi_t(T,S)&=&\phi_t(\alpha^*\,T+T\,\alpha+\gamma^*\,T\,\gamma,\alpha^*\,S+S\,\alpha+\gamma^*\,S\,\gamma)\,dt\label{6}\\&+&\phi_t(\gamma^*\,S+S\,\beta-\gamma^*\,S\,(2+\delta),\gamma^*\,T+T\,\beta+\gamma^*\,T\,\delta)\,dA_t\nonumber\\  &+&\phi_t(\beta^*\,S+S\,\gamma-\delta^*\,S\,\gamma,\beta^*\,T+T\,\gamma+\delta^*\,T\,\gamma)\,dA_t^{\dagger}\nonumber\\&+&\phi_t(\delta^*\,T+T\,\delta+\delta^*\,T\,\delta,-(2\,S+S\,\delta+\delta^*\,S+\delta^*\,S\,\delta))\,d\Lambda_t\nonumber\end{eqnarray}with\begin{eqnarray}\phi_0(T,S)=T+S\label{6a}\end{eqnarray}\end{proposition}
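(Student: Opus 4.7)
The plan is to apply the quantum It\^o product rule to the factorization $\phi_t(T,S) = V_t^*\,Y_t\,V_t$ with $Y_t := T + S\,J_t$, and then reorganise the result into the canonical form $\phi_t(\cdot,\cdot)$. First I would substitute (\ref{1}) and (\ref{2}) into the stochastic differential equations for $V_t$ and $V_t^*$ to work purely with the basic differentials $dA_t$, $dA_t^\dagger$, $d\Lambda_t$, $dt$ (this is the second form already recorded in the statement). From (\ref{3}) one reads off $dY_t = S\,dJ_t = -2\,S\,J_t\,d\Lambda_t$, and $V_0 = V_0^* = J_0 = 1$ gives (\ref{6a}) at once.

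Next I would expand the triple product as
\[
d(V_t^* Y_t V_t) = (dV_t^*)\,Y_t\,V_t + V_t^*\,(dY_t)\,V_t + V_t^*\,Y_t\,(dV_t) + \text{(three pairwise and one triple It\^o correction)},
\]
evaluating each product via the Hudson--Parthasarathy table, in which only $dA_t\,dA_t^\dagger = dt$, $d\Lambda_t\,d\Lambda_t = d\Lambda_t$, $dA_t\,d\Lambda_t = dA_t$, and $d\Lambda_t\,dA_t^\dagger = dA_t^\dagger$ are nonzero. Two algebraic properties of $J_t$ then put every surviving term in shape: $J_t$ is self-adjoint and unitary, so $J_t^2 = I$, and $J_t$ commutes with every bounded system-space operator, so $J_t X J_t = X$ for all $X \in \mathcal{B}(\mathcal{H})$. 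Consequently any $J_t$ arising inside a product is either annihilated by a partner $J_t$ or slid outward to sit between the operator coefficient and the rightmost $V_t$, where it feeds the ``$S$-slot'' of $\phi_t$.

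The coefficient of each basic differential is then split into a $T$-part (no residual $J_t$) and an $S J_t$-part (one residual $J_t$). For the $d\Lambda_t$-line, for example, the $T$-slot receives $\delta^* T$ and $T\delta$ from the two first-order terms and $\delta^* T \delta$ from the $d\Lambda_t\,d\Lambda_t$ contraction inside $(dV_t^*)\,Y_t\,(dV_t)$; the $S\,J_t$-slot receives the direct $-2 S$ from $V_t^*(dY_t)V_t$, the analogous $\delta^* S$, $S\delta$ and $\delta^* S \delta$ pieces, and in addition the three further corrections $(dV_t^*)(dY_t)V_t$, $V_t^*(dY_t)(dV_t)$, $(dV_t^*)(dY_t)(dV_t)$, each contributing an extra $-2$-weighted copy via $d\Lambda_t\,d\Lambda_t = d\Lambda_t$. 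Summing yields exactly $-(2S + S\delta + \delta^* S + \delta^* S\delta)$, and the same bookkeeping applied to $dt$, $dA_t$, $dA_t^\dagger$ produces the remaining three lines of (\ref{6}).

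The main obstacle is precisely this bookkeeping of the corrections involving $dY_t$: because $dY_t$ is proportional to $d\Lambda_t$ and $d\Lambda_t$ multiplies nontrivially with $dA_t$, $dA_t^\dagger$ and itself, these corrections contribute to \emph{all four} noise coefficients rather than only to $dt$, and their signs interact with $J_t^2 = I$ to produce the characteristic asymmetric placement of $T$ and $S$ in each slot of (\ref{6}). Keeping track of where every $J_t$ lands, and systematically using $J_t X = X J_t$ and $J_t S J_t = S$ to push residual reflections to the outer right, is the only delicate point; the rest is routine application of the It\^o table.
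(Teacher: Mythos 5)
Your proposal is correct and follows essentially the same route as the paper: both expand $d\bigl(V_t^*(T+S\,J_t)V_t\bigr)$ by the quantum It\^o product rule (the paper iterates the binary rule, first computing $d(J_t\,V_t)=\left(\alpha\,J_t\,dt+\beta\,dA_t-\gamma\,dA_t^{\dagger}-(2+\delta)\,J_t\,d\Lambda_t\right)V_t$, which is exactly your $dY_t$ bookkeeping in a different grouping), and both then use $J_t^2=1$ together with the commutation of $J_t$ with system operators to sort each coefficient into the $T$- and $S\,J_t$-slots. The one small imprecision is your remark that the $dY_t$-corrections contribute to all four coefficients: being proportional to $d\Lambda_t$ they feed $dA_t$, $dA_t^{\dagger}$ and $d\Lambda_t$ but not $dt$ (since $d\Lambda_t\,dt=0$); your actual computation of the $d\Lambda_t$ line is nevertheless right.
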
\begin{proof} Making use of the
 algebraic rule
 \begin{eqnarray*}
 d(x\,y)=dx\, y+x\, dy+dx \, dy
 \end{eqnarray*}we find
 \begin{eqnarray*}
 &d\phi_t(T,S)=dV^*_t\,(T+S\,J_t)\,V_t +V^*_t\,d((T+S\,J_t)\,V_t)+dV^*_t\,d((T+S\,J_t)\,V_t)&\\&=dV^*_t\,(T+S\,J_t)\,V_t+V^*_t\,T\,dV_t+V^*_t\,S\,d(J_t\,V_t)+dV^*_t\,T\,dV_t+dV^*_t\,S\,d(J_t\,V_t)&.\end{eqnarray*}But, by (\ref{3}), the It\^{o} table for the Boson stochastic differentials and the fact that $ J_t^2=1$\begin{eqnarray*}d(J_t\,V_t)&=&dJ_t\, V_t+J_t\, dV_t+dJ_t \, dV_t\\&=&-2\,J_t\,d\Lambda_t\,V_t+J_t\,\left(\alpha\,dt+ \beta\,J_t\,dA_t +\gamma\,J_t \,dA_t^{\dagger}+\delta\,d\Lambda_t\right)\,V_t\\&-&2\,J_t\,d\Lambda_t\,\left(\alpha\,dt+ \beta\,J_t\,dA_t +\gamma\,J_t \,dA_t^{\dagger}+\delta\,d\Lambda_t\right)\,V_t\\&=&\left(\alpha\,J_t\,dt+\beta\,dA_t -\gamma\,dA_t^{\dagger}-(2+\delta)\,J_t\,d\Lambda_t\right)\,V_t.\end{eqnarray*}Thus

\begin{eqnarray*}
&d\phi_t(T,S)=V^*_t\,\left(\alpha^*\,dt+ \beta^*\,J_t\,dA_t^{\dagger} +\gamma^*\,J_t\,
 dA_t+\delta^*\,d\Lambda_t\right)\,(T+S\,J_t)\,V_t&\\
&+V^*_t\,T\,\left(\alpha\,dt+ \beta\,J_t\,dA_t +\gamma\,J_t \,dA_t^{\dagger}+\delta\,d\Lambda_t\right)\,V_t&\\
&+V^*_t\,S\,\left(\alpha\,J_t\,dt+\beta\,dA_t -\gamma\,dA_t^{\dagger}-(2+\delta)\,J_t\,d\Lambda_t\right)\,V_t&\\
&+V^*_t\,\left(\alpha^*\,dt+ \beta^*\,J_t\,dA_t^{\dagger} +\gamma^*\,J_t\, dA_t+\delta^*\,d\Lambda_t\right)&\\
&\times\,T\,\left(\alpha\,dt+ \beta\,J_t\,dA_t +\gamma\,J_t \,dA_t^{\dagger}+\delta\,d\Lambda_t\right)\,V_t&\\
&+V^*_t\,\left(\alpha^*\,dt+ \beta^*\,J_t\,dA_t^{\dagger} +\gamma^*\,J_t\, dA_t+\delta^*\,d\Lambda_t\right)&\\
&\times \,S\,\left(\alpha\,J_t\,dt+\beta\,dA_t
 -\gamma\,dA_t^{\dagger}-(2+\delta)\,J_t\,d\Lambda_t\right)\,V_t&\\
&=\phi_t(\alpha^*\,T+T\,\alpha+\gamma^*\,T\,\gamma,\alpha^*\,S+S\,\alpha+\gamma^*\,S\,\gamma)\,dt&\\
&+\phi_t(\gamma^*\,S+S\,\beta-\gamma^*\,S\,(2+\delta),\gamma^*\,T+T\,\beta+\gamma^*\,T\,\delta)\,dA_t&\\
&+\phi_t(\beta^*\,S+S\,\gamma-\delta^*\,S\,\gamma,\beta^*\,T+T\,\gamma+\delta^*\,T\,\gamma)\,dA_t^{\dagger}&\\
&+\phi_t(\delta^*\,T+T\,\delta+\delta^*\,T\,\delta,-(2\,S+S\,\delta+\delta^*\,S+\delta^*\,S\,\delta))\,d\Lambda_t&
\end{eqnarray*}

\end{proof}With the processes $U_t$  and $U_t^*$ of  (\ref{4}) and (\ref{5})  we associate the Fermion flow \begin{eqnarray}j_t(X):=U^*_t\,X\,U_t=\phi(X,0)\label{6aa} \end{eqnarray}and the reflected flow\begin{eqnarray}r_t(X):=j_t(X\,J_t)=\phi(0,X)\label{6b} \end{eqnarray}where $X$ is a bounded system space operator.\begin{corollary} The Fermion flow $j_t(X)$ and the reflected flow  $ r_t(X)$ defined in  (\ref{6aa}) and (\ref{6b}) satisfy  the system of  quantum stochastic differential
 equations
 \begin{eqnarray}
 &dj_t(X)=j_t\left( i\,\left[H,X\right]-\frac{1}{2}\left(L^*LX+XL^*L-2L^*XL\right)\right)\,dt&\label{7}\\&+r_t\left(\left[L^*,X\right]\,W\right)\,dA_t +r_t\left(W^*\,\left[X,L\right]\right)\,dA_t^{\dagger}+j_t\left(W^*\,X\,W-X\right)\,d\Lambda_t& \nonumber\end{eqnarray}and\begin{eqnarray}&dr_t(X)= r_t\left( i\,\left[H,X\right]-\frac{1}{2}\left(L^*LX+XL^*L-2L^*XL\right)     \right)\,dt&\label{8}\\&-j_t\left(\{ L^*,X \}\,W \right)\,dA_t -j_t\left(W^*\{ X,L \}-2\,X\,L\right)\,dA_t^{\dagger}-r_t\left(W^*\,X\,W+X\right)\,d\Lambda_t& \nonumber\end{eqnarray}with\begin{eqnarray*}j_0(X)=r_0(X)=X.\end{eqnarray*}Here, as usual, $[x,y]:=x\,y-y\,x$ and $\{x,y\}:=x\,y+y\,x$.\end{corollary}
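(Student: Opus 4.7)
The plan is to derive both equations directly from Proposition \ref{prop1} by specializing the generic evolution $V_t$ to the Fermion unitary $U_t$ of equation (\ref{4}). Comparing (\ref{4}) with the hypothesis of Proposition \ref{prop1}, the correct identification is
\begin{equation*}
\alpha=-\!\left(iH+\tfrac{1}{2}L^{*}L\right),\qquad \beta=-L^{*}W,\qquad \gamma=L,\qquad \delta=W-1,
\end{equation*}
so that, by definitions (\ref{6aa}) and (\ref{6b}), we have $j_t(X)=\phi_t(X,0)$ and $r_t(X)=\phi_t(0,X)$. Substituting $T=X,\,S=0$ (respectively $T=0,\,S=X$) into the master equation (\ref{6}) will collapse six of the eight slots of $\phi_t$ to zero and leave exactly the four coefficients to be identified.

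For $j_t(X)=\phi_t(X,0)$, the $dt$ and $d\Lambda_t$ terms become $j_t$-valued (since only the $T$-slot is nonzero), while the $dA_t$ and $dA_t^{\dagger}$ terms become $r_t$-valued (since only the $S$-slot is nonzero). A short algebraic check then gives: in the drift, $\alpha^{*}X+X\alpha+\gamma^{*}X\gamma=i[H,X]-\tfrac{1}{2}(L^{*}LX+XL^{*}L-2L^{*}XL)$; in $d\Lambda_t$, $\delta^{*}X+X\delta+\delta^{*}X\delta=W^{*}XW-X$; in $dA_t$, $\gamma^{*}X+X\beta+\gamma^{*}X\delta=L^{*}XW-XL^{*}W=[L^{*},X]W$; and in $dA_t^{\dagger}$, $\beta^{*}X+X\gamma+\delta^{*}X\gamma=W^{*}XL-W^{*}LX=W^{*}[X,L]$. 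This reproduces (\ref{7}).

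For $r_t(X)=\phi_t(0,X)$ the roles swap: the $dt$ and $d\Lambda_t$ terms of (\ref{6}) become $r_t$-valued while the $dA_t$ and $dA_t^{\dagger}$ terms become $j_t$-valued. The drift gives the same quantity as in the $j_t$ case, namely the Lindblad-type generator evaluated through $r_t$. In $d\Lambda_t$, $-(2X+X\delta+\delta^{*}X+\delta^{*}X\delta)$ simplifies, after using $\delta=W-1$ and $\delta^{*}=W^{*}-1$, to $-(W^{*}XW+X)$. In $dA_t$, $\gamma^{*}X+X\beta-\gamma^{*}X(2+\delta)=L^{*}X-XL^{*}W-L^{*}X(1+W)=-(L^{*}X+XL^{*})W=-\{L^{*},X\}W$. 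In $dA_t^{\dagger}$, $\beta^{*}X+X\gamma-\delta^{*}X\gamma=-W^{*}LX+XL-(W^{*}-1)XL=-(W^{*}\{X,L\}-2XL)$. This reproduces (\ref{8}).

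The initial conditions $j_0(X)=r_0(X)=X$ follow from (\ref{6a}) together with $U_0=V_0=1$ and $J_0=1$. There is no real obstacle here beyond bookkeeping; the only place where one must be careful is in the $r_t$ computation of the $dA_t^{\dagger}$ coefficient, where the cancellation $-W^{*}LX+XL-W^{*}XL+XL=-W^{*}\{L,X\}+2XL$ must be reorganized into the anticommutator $\{X,L\}$ to match the form stated in (\ref{8}).
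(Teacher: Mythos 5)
Your proposal is correct and follows exactly the route of the paper's own proof: specialize Proposition \ref{prop1} to $U_t$ with $\alpha=-(iH+\tfrac{1}{2}L^{*}L)$, $\beta=-L^{*}W$, $\gamma=L$, $\delta=W-1$, and take $(T,S)=(X,0)$ for $j_t$ and $(0,X)$ for $r_t$. The only difference is that you carry out the coefficient simplifications explicitly (all of which check out), whereas the paper leaves them implicit.
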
\begin{proof} We replace $V_t$  and $V_t^*$ in Proposition \ref{prop1} by $U_t$ and $U_t^*$. Then, equation (\ref{7}) is a special case of (\ref{6}) for $T=X$, $S=0$ and\begin{eqnarray*}\alpha&=&-(iH+\frac{1}{2}\,L^*L)   \\\beta&=& -L^*\,W  \\\gamma&=&L
 \\\delta&=&W-1
 \end{eqnarray*}
 while equation (\ref{8}) is a special case of (\ref{6}) for $T=0$, $S=X$ and $\alpha,\beta,\gamma,\delta$ as above. \end{proof}\section{Generalized Fermion Flows} Fermion flows can be formulated and studied in a manner similar to Boson (also called Evans-Hudson) flows (cf. \cite{11a} and \cite{14}). Let $\mathcal{B}(\mathcal{H})$ denote the space of bounded system operators. We define  $\mathcal{B}(\mathcal{H})$-valued operations $\bigtriangledown $ and $\bigtriangleup $ on $\mathcal{B}(\mathcal{H})\times\mathcal{B}(\mathcal{H})$ by\begin{eqnarray}(T_1,S_1)\bigtriangledown (T_2,S_2):&=&T_1\,T_2+S_1\,S_2\label{op1}\\(T_1,S_1)\bigtriangleup (T_2,S_2):&=&T_1\,S_2+S_1\,T_2.\label{op2}\end{eqnarray}Notice that if\begin{eqnarray*}\rho(x,y):=(y,x)\end{eqnarray*}is the reflection map, then\begin{eqnarray*}(T_1,S_1)\bigtriangleup (T_2,S_2)=(T_1,S_1)\bigtriangledown \rho(T_2,S_2).\end{eqnarray*}We also define the
 $\mathcal{B}(\mathcal{H})\times\mathcal{B}(\mathcal{H})$-valued product map $\circ$ on $\left(\mathcal{B}(\mathcal{H})\times\mathcal{B}(\mathcal{H})\right)^2$ by\begin{eqnarray}x\circ y:= (x\bigtriangledown y, x \bigtriangleup y)=( T_1\,T_2+S_1\,S_2 , T_1\,S_2+S_1\,T_2  ) \label{op3}\end{eqnarray}where $x=(T_1,S_1)$ and $y=(T_2,S_2)$. From the definitions it follows that the $\circ$-product is associative with unit $\textrm{id}:=(1,0)$ where $1$ and $0$ are  the identity and zero  operators in $\mathcal{B}(\mathcal{H})$. Let the  flow $\phi_t(T,S):=V_t^*\,(T+S\,J_t)\,V_t$ be as in Proposition \ref{prop1} and let $x=(T_1,S_1)$ and
 $y=(T_2,S_2)$. Then
 \begin{eqnarray*}
 \phi_t(x)\,\phi_t(y)&=&\phi_t(T_1,S_1)\phi_t(T_2,S_2)\\&=&V_t^*\,(T_1+S_1\,J_t)\,V_t\,V_t^*\,(T_2+S_2\,J_t)\,V_t\\&=&V_t^*\,(T_1+S_1\,J_t)\,(T_2+S_2\,J_t)\,V_t\\&=&V_t^*\,(T_1\,T_2+S_1\,S_2+(T_1\,S_2+S_1\,T_2)\,J_t)\,V_t\\&=&\phi_t(T_1\,T_2+S_1\,S_2,T_1\,S_2+S_1\,T_2)\\&=&\phi_t( x\circ y )\end{eqnarray*}i.e $\phi_t$ is a homomorphism with respect to the $\circ$-product. Since $\phi_t$ is the solution of (\ref{6}), (\ref{6a}) this suggests considering flows satisfying quantum
 stochastic differential equations of the form
 \begin{eqnarray}
 d\phi_t(x)&=&\phi_t(\theta_1(x))\,dt+\phi_t(\theta_2(x))\,dA_t+\phi_t(\theta_3(x))\,dA_t^{\dagger}+\phi_t(\theta_4(x))\,
 d\Lambda_t\qquad\label{9}
 \end{eqnarray}
 with
 \begin{eqnarray*}
 \phi_0(x)= x  \bigtriangledown \textrm{id}  +x \bigtriangleup \textrm{id} \label{9a}
 \end{eqnarray*}
 where $x=(T,S)\in\mathcal{B}(\mathcal{H})\times\mathcal{B}(\mathcal{H})  $ and for $i=1,2,3,4$,
 $\theta_i:\mathcal{B}(\mathcal{H})\times\mathcal{B}(\mathcal{H})\longrightarrow\mathcal{B}(\mathcal{H})\times\mathcal{B}(\mathcal{H})$ are the "structure maps". In the case of (\ref{6}), (\ref{6a})\begin{eqnarray*}\theta_1(T,S)&=&(\alpha^*\,T+T\,\alpha+\gamma^*\,T\,\gamma,\alpha^*\,S+S\,\alpha+\gamma^*\,S\,\gamma)  \\\theta_2(T,S)&=&(\gamma^*\,S+S\,\beta-\gamma^*\,S\,(2+\delta),\gamma^*\,T+T\,\beta+\gamma^*\,T\,\delta)\\\theta_3(T,S)&=&(\beta^*\,S+S\,\gamma-\delta^*\,S\,\gamma,\beta^*\,T+T\,\gamma+\delta^*\,T\,\gamma)\\\theta_4(T,S)&=&(\delta^*\,T+T\,\delta+\delta^*\,T\,\delta,-(2\,S+S\,\delta+\delta^*\,S+\delta^*\,S\,\delta))\end{eqnarray*}where\begin{eqnarray*}\alpha&=&-(iH+\frac{1}{2}\,L^*L)   \\\beta&=& -L^*\,W  \\\gamma&=&L   \\\delta&=&W-1.\end{eqnarray*} The general conditions on the $\theta_i$ in order for $\phi_t$ to be an identity preserving $\circ$-product homomorphism are given in the following:\begin{proposition} Let $\phi_t$ be the solution of (\ref{9}). Then
 \begin{eqnarray*}
 \phi_t( x )\,\phi_t( y )&=&\phi_t( x\circ y )\\\phi_t( x )^*&=&\phi_t( x^* )\\\phi_t(\textrm{id})&=&1
 \end{eqnarray*}
 if and only if the $\theta_i$ satisfy the structure equations
 \begin{eqnarray}
 \theta_1(x)\circ y+x \circ \theta_1(y)+\theta_2(x)\circ\theta_3(y)&=&\theta_1(x\circ y)\label{s1}\\\theta_2(x)\circ y+x\circ\theta_2(y)+\theta_2(x)\circ\theta_4(y)&=&\theta_2(x\circ y)\label{s2}\\\theta_3(x)\circ y+x\circ\theta_3(y)+\theta_3(x)\circ\theta_4(y)&=&\theta_3(x\circ y)\label{s3}\\\theta_4(x)\circ y+x\circ\theta_4(y)+\theta_4(x)\circ\theta_4(y)&=&\theta_4(x\circ y)\label{s4}\end{eqnarray}and, with $*$ denoting "adjoint" and $x=(T_1,S_1)\Leftrightarrow
 x^*=(T_1^*,S_1^*)$,
 \begin{eqnarray}
 \theta_1(x^*)&=&(\theta_1(x))^*\label{s5}\\\theta_2(x^*)&=&(\theta_3(x))^*\label{s6}\\\theta_3(x^*)&=&(\theta_2(x))^*\label{s7}\\\theta_4(x^*)&=&(\theta_4(x))^*\label{s8}\end{eqnarray}with\begin{eqnarray}\theta_1(\textrm{id})=\theta_2(\textrm{id})=\theta_3(\textrm{id})=\theta_4(\textrm{id})=(0,0).\label{s9}\end{eqnarray}\end{proposition}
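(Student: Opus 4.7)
The plan is to reduce both implications to the quantum It\^{o} product rule of Section 2 together with the uniqueness of solutions of QSDEs of the form (\ref{9}). The key initial-value observation is that $\phi_0(T,S)=T+S$ automatically satisfies $\phi_0(x)\phi_0(y)=\phi_0(x\circ y)$ by the direct expansion $(T_1+S_1)(T_2+S_2)=T_1T_2+S_1S_2+T_1S_2+S_1T_2$, so any discrepancy process will start from zero and can be killed by a uniqueness argument.

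For the ``if'' direction I would introduce the discrepancy $M_t(x,y):=\phi_t(x)\phi_t(y)-\phi_t(x\circ y)$ and compute $dM_t(x,y)$ using the product rule $d(XY)=(dX)\,Y+X\,dY+(dX)(dY)$. Plugging in (\ref{9}) and reading off the It\^{o} table, the correction $(d\phi_t(x))(d\phi_t(y))$ contributes four nonzero terms of the form $\phi_t(\theta_i(x))\phi_t(\theta_j(y))$ attached to $dt,\,dA_t,\,dA_t^{\dagger},\,d\Lambda_t$ respectively. Subtracting $d\phi_t(x\circ y)$ and invoking (\ref{s1})--(\ref{s4}) to rewrite each $\theta_i(x\circ y)$, every coefficient of $dM_t(x,y)$ reorganizes as a linear combination of three $M_t$-terms; the resulting system is homogeneous linear in $M$ with vanishing initial data, and QSDE uniqueness forces $M_t\equiv 0$. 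The adjoint property $\phi_t(x)^*=\phi_t(x^*)$ follows by the same template: taking adjoints of (\ref{9}) and using $(dA_t)^*=dA_t^{\dagger}$, $(d\Lambda_t)^*=d\Lambda_t$, conditions (\ref{s5})--(\ref{s8}) ensure that $\phi_t(x)^*$ and $\phi_t(x^*)$ satisfy the same QSDE with identical initial data $\phi_0(x)^*=T^*+S^*=\phi_0(x^*)$. Finally, (\ref{s9}) makes the right-hand side of (\ref{9}) vanish when $x=\textrm{id}$, so $\phi_t(\textrm{id})=\phi_0(\textrm{id})=1$ for all $t$.

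For the ``only if'' direction I would reverse the calculation: assuming the three properties, I compute $d(\phi_t(x)\phi_t(y))$ by the product rule, use the homomorphism to replace every $\phi_t(a)\phi_t(b)$ by $\phi_t(a\circ b)$, and equate the result with $d\phi_t(x\circ y)$ obtained from (\ref{9}). The linear independence of $dt,\,dA_t,\,dA_t^{\dagger},\,d\Lambda_t$ then gives four identities $\phi_t(R_i(x,y))=0$, where each $R_i$ is the residue obstructing (\ref{s1})--(\ref{s4}); analogous manipulations on $\phi_t(x)^*=\phi_t(x^*)$ and on $\phi_t(\textrm{id})=1$ produce the residues for (\ref{s5})--(\ref{s9}). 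The step I expect to be the main obstacle is this coefficient-extraction: because $\phi_0(T,S)=T+S$ only retains the sum of the two components, one cannot directly deduce pair-valued equations from $\phi_t(R_i(x,y))=0$ evaluated at $t=0$. I would get around this by exploiting the four $\theta_i$-slots simultaneously---the asymmetric roles of the $T$- and $S$-entries already visible in the explicit formulas displayed after (\ref{9}) (notably the $(2+\delta)$ shift and the $-(2\,S+S\,\delta+\delta^*\,S+\delta^*\,S\,\delta)$ contribution) decouple the two coordinates, forcing each $R_i(x,y)$ to vanish as a pair. All remaining steps are routine bookkeeping with the It\^{o} table, the $\circ$-product (\ref{op3}), and QSDE uniqueness.
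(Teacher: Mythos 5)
Your overall strategy is the same as the paper's: differentiate the three algebraic identities with the quantum It\^{o} product rule, substitute (\ref{9}), and match coefficients of $dt$, $dA_t$, $dA_t^{\dagger}$, $d\Lambda_t$; the paper's proof is exactly this computation, presented as a chain of equivalences between the integral identities and their differential forms. Where you genuinely depart from it is in the ``if'' direction: the paper compresses sufficiency into the bare assertion that the identity $\phi_t(x\circ y)=\phi_t(x)\phi_t(y)$ is equivalent to the corresponding relation among differentials, whereas you introduce the discrepancy $M_t(x,y)=\phi_t(x)\phi_t(y)-\phi_t(x\circ y)$, use (\ref{s1})--(\ref{s4}) to close its QSDE into a homogeneous linear system with zero initial data, and invoke uniqueness. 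That is the more careful route (it is essentially the Evans--Hudson iteration argument, cf.\ \cite{11a}) and it turns a formal equivalence into an actual proof; just note that the ``system'' is indexed by all pairs $(x,y)$, so ``QSDE uniqueness'' must be spelled out as a Picard/Gronwall estimate on matrix elements against exponential vectors rather than quoted for a single equation. One caution on the ``only if'' direction: you rightly flag that passing from $\phi_t(R_i(x,y))=0$ to $R_i(x,y)=(0,0)$ requires the family $\{\phi_t\}$ to separate the two components of a pair, but your proposed remedy --- reading the $T$/$S$ asymmetry off the explicit $\theta_i$ displayed after (\ref{9}) --- only makes sense for the concrete Fermion flow, not for the general flow of the proposition. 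The paper silently equates coefficients at this step; to close the gap cleanly you should either assume injectivity of $(T,S)\mapsto\phi_t(T,S)$ for some $t>0$, or verify it for the concrete flow from $\phi_t(T,S)=V_t^*(T+S\,J_t)\,V_t$ together with $J_t\neq 1$ for $t>0$.
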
\begin{proof} 
\begin{eqnarray*}
&\phi_t( x\circ y )=\phi_t( x )\,\phi_t( y )&\\
 &\Leftrightarrow d\phi_t( x\circ y )=d\phi_t( x )\,\phi_t( y )+\phi_t( x )\,d\phi_t( y )+d\phi_t( x )\,d\phi_t( y )&
\end{eqnarray*}

which implies that

\begin{eqnarray*}
&\phi_t(\theta_1(x\circ y  ))\,dt+\phi_t(\theta_2( x\circ y   ))\,dA_t+\phi_t(\theta_3( x\circ y   ))\,dA_t^{\dagger}&\\
&+\phi_t(\theta_4(     x\circ y
 ))\,d\Lambda_t=\phi_t(\theta_1(x))\,\phi_t(y)\,dt+\phi_t(\theta_2(x))\,\phi_t(y)\,dA_t&\\
&+\phi_t(\theta_3(x))\,\phi_t(y)\,dA_t^{\dagger}+\phi_t(\theta_4(x))\,\phi_t(y)\,d\Lambda_t+\phi_t(x)\phi_t(\theta_1(y))\,dt&\\
&+\phi_t(x)\,\phi_t(\theta_2(y))\,dA_t+\phi_t(x)\,\phi_t(\theta_3(y))\,dA_t^{\dagger}+\phi_t(x)\,\phi_t(\theta_4(y))\,d\Lambda_t+&\\&\phi_t(\theta_2(x))\,\phi_t(\theta_3(y))\,dt+\phi_t(\theta_2(x))\,\phi_t(\theta_4(y))\,dA_t&\\
&+\phi_t(\theta_3(x))\,\phi_t(\theta_4(y))\,dA_t^{\dagger}+\phi_t(\theta_4(x))\,\phi_t(\theta_4(y))\,d\Lambda_t\end{eqnarray*}from which collecting the $dt,dA_t,dA_t^{\dagger}$ and $d\Lambda_t  $ terms on each side, using the homomorphism property and then equating the coefficients of  $dt,dA_t,dA_t^{\dagger}$ and $d\Lambda_t  $   on both sides we obtain (\ref{s1})-(\ref{s4}). Similarly\begin{eqnarray*}\phi_t( x )^*=\phi_t( x^* )\Leftrightarrow d\phi_t( x )^*=d\phi_t( x^* )\end{eqnarray*}i.e
 \begin{eqnarray*}
 &\phi_t(\theta_1(x)^*)\,dt+\phi_t(\theta_2(x)^*)\,dA_t^{\dagger}+\phi_t(\theta_3(x)^*)\,dA_t+\phi_t(\theta_4(x)^*)\,d\Lambda_t=&\\&\phi_t(\theta_1(x^*))\,dt+\phi_t(\theta_2(x^*))\,dA_t+\phi_t(\theta_3(x^*))\,dA_t^{\dagger}+\phi_t(\theta_4(x^*))\,d\Lambda_t&\end{eqnarray*}and (\ref{s5})-(\ref{s8}) follow by equating the coefficients of  $dt,dA_t,dA_t^{\dagger}$ and $d\Lambda_t  $   on both sides. Finally \begin{eqnarray*}\phi_t(\textrm{id})=1\Leftrightarrow d\phi_t( \textrm{id} )=0\end{eqnarray*}which by the linear independence of $dt$ and
 the stochastic differentials implies (\ref{s9}).
 \end{proof}
 \section{Optimal Noise Coefficients}
 As in \cite{2} and \cite{4}, motivated by classical linear system control theory, we think of the self-adjoint
 operator $H$ appearing in (\ref{4}) as fixed and we consider the problem of  determining the coefficients $L$ and $W$
 of the noise part of the Hamiltonian of the evolution equation (\ref{4})  that minimize the "evolution performance
 functional"
 \begin{eqnarray}
 Q_{\xi,T}(u)=\int_0^T\,[\|X\,U_t\, \xi\|^2+\frac{1}{4}\,\|L^*\,L\,U_t\, \xi\|^2]\,dt+\frac{1}{2}\,\|L\,U_T\, \xi\|^2
 \label{q1}
 \end{eqnarray}
 where $T\in[0,+\infty)$, $\xi$ is an arbitrary vector in the exponential domain of $\mathcal{H}\otimes \Gamma$
 and $X$ is a bounded self-adjoint system operator. By the unitarity of $U_t$, $U_T$,$J_t$, and $J_T$, (\ref{q1})
 is the same as the "Fermion flow performance functional"
 \begin{equation}
 J_{\xi,T}(L,W)=\int_0^T\,[\,\|j_t(X)\,\xi\|^2+\frac{1}{4}\|j_t(L^*L)\,\xi\|^2\, ]\,dt+\frac{1}{2}\|j_T(L)\,\xi\|^2
 \label{q2}
 \end{equation}
 and the "reflected flow performance functional"
 \begin{equation}
 R_{\xi,T}(L,W)=\int_0^T\,[\,\|r_t(X)\,\xi\|^2+\frac{1}{4}\|r_t(L^*L)\,\xi\|^2\, ]\,dt+\frac{1}{2}\|r_T(L)\,\xi\|^2
 \label{q3}
 \end{equation}
 We consider the problem of  minimizing the  functionals $ J_{\xi,T}(L,W)$ and   $R_{\xi,T}(L,W)$   over all system
 operators $L,W$ where $L$ is bounded and $W$ is unitary. The motivation behind
 the definition of the performance functionals (\ref{q1}), (\ref{q2})  and (\ref{q3}) can be found in the following
 theorem which is the quantum stochastic analogue of the classical linear regulator theorem.
 \begin{theorem}
 Let $U=\{U_t\,,\,t\geq 0\}$ be a  process satisfying the  quantum stochastic differential equation
 \begin{equation}
 dU_t=(F\,U_t+u_t)\,dt+ \Psi \,U_t\, dF_t+ \Phi \,U_t\, dF_t^{\dagger}+Z\,U_t\,d\Lambda_t,\,U_0=1,\,t\in [0,T]
 \label{f1}
 \end{equation}
 with adjoint
 \begin{equation}
 dU_t^*=(U_t^*\,F^*+u_t^*)\,dt+ U_t^*\,\Psi^* \,dF_t^{\dagger}+ U_t^*\,\Phi^* dF_t+U_t^*\,Z^*\,d\Lambda_t,\,U_0^*=1
 \label{f2}
 \end{equation}
 where $0<T<+\infty$,  the coefficients $F,\,\Psi,\,\Phi, \,Z$  are bounded operators on the system space
 $\mathcal{H}$ and $u_t:=-\Pi \,U_t$ for some positive bounded system operator $\Pi$.  Then the functional
 \begin{eqnarray}
 Q_{\xi,T}(u)=\int_0^T\,[<U_t \,\xi,X^2\,U_t\, \xi>+<u_t\, \xi,u_t\, \xi>]\,dt-<u_T\, \xi,U_T \,\xi>
 \label{q4}
 \end{eqnarray}
 where $X$  is a system space observable, identified with its ampliation $X \otimes I$ to $\mathcal{H}  \otimes \Gamma$,
 is minimized over the set of feedback control processes of the form $u_t=-\Pi \,U_t$, by choosing $\Pi$ to be a
 bounded,  positive, self-adjoint   system operator satisfying
 \begin{eqnarray}
 \Pi \, F+F^* \Pi+{\Phi}^* \Pi \Phi-{\Pi}^2+X^2&=&0\label{r}\\\Pi \,\Psi + {\Phi}^* \, \Pi + {\Phi}^* \, \Pi \, Z&=&0
 \label{r1}\\\Pi \,Z +Z^* \,\Pi + Z^* \, \Pi \,Z&=&0\label{r2}.
 \end{eqnarray}
 The minimum value is $<\xi,\Pi \xi>$. We recognize (\ref{r}) as the algebraic Riccati equation.
 \end{theorem}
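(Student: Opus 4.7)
My plan is to mirror the classical linear-quadratic regulator argument, with $\xi \mapsto \langle\xi,\Pi\xi\rangle$ playing the role of the value function. The three algebraic conditions (\ref{r})--(\ref{r2}) should emerge as exactly what is needed to make the quantum It\^o differential of the Lyapunov-type expression $U_t^*\Pi U_t$ collapse to a pure $dt$-differential, after which a completion-of-squares argument identifies the minimum.

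The first step is to apply the quantum It\^o product rule $d(U_t^*\Pi U_t) = (dU_t^*)\Pi U_t + U_t^*\Pi (dU_t) + (dU_t^*)\Pi(dU_t)$, rewriting the Fermion differentials via $dF_t = J_t\,dA_t$ and $dF_t^{\dagger} = J_t\,dA_t^{\dagger}$. Using $J_t^2 = 1$, the commutativity of $J_t$ with system operators, and the Boson It\^o table, the quadratic cross term $(dU_t^*)\Pi(dU_t)$ produces the four contributions $U_t^*\Phi^*\Pi\Phi U_t\,dt$, $U_t^*Z^*\Pi\Phi U_t\,dF_t^{\dagger}$, $U_t^*\Phi^*\Pi Z\,U_t\,dF_t$, and $U_t^*Z^*\Pi Z\,U_t\,d\Lambda_t$. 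Assembling everything, the coefficient of $dF_t$ is $U_t^*(\Phi^*\Pi + \Pi\Psi + \Phi^*\Pi Z)U_t$, that of $dF_t^{\dagger}$ is its adjoint, and that of $d\Lambda_t$ is $U_t^*(Z^*\Pi + \Pi Z + Z^*\Pi Z)U_t$. These are precisely (\ref{r1}), its adjoint, and (\ref{r2}), so they all vanish at the operator level, leaving
\begin{eqnarray*}
d(U_t^*\Pi U_t) = U_t^*(F^*\Pi + \Pi F + \Phi^*\Pi\Phi)U_t\,dt + \left(u_t^*\Pi U_t + U_t^*\Pi u_t\right)dt.
\end{eqnarray*}

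Second, I would integrate over $[0,T]$, take matrix elements in $\xi$, and use the algebraic Riccati equation (\ref{r}) in the form $\Pi F + F^*\Pi + \Phi^*\Pi\Phi = \Pi^2 - X^2$ to substitute out $\int_0^T\langle U_t\xi,X^2 U_t\xi\rangle\,dt$ from the definition (\ref{q4}) of $Q_{\xi,T}(u)$. Exploiting self-adjointness of $\Pi$, the three quadratic pieces reassemble into a perfect square:
\begin{eqnarray*}
Q_{\xi,T}(u) = \int_0^T \|u_t\xi + \Pi U_t\xi\|^2\,dt + \langle\xi,\Pi\xi\rangle - \langle(\Pi U_T + u_T)\xi, U_T\xi\rangle.
\end{eqnarray*}
The feedback choice $u_t = -\Pi U_t$ annihilates both the integrand and the boundary pairing, giving $Q_{\xi,T}(-\Pi U) = \langle\xi,\Pi\xi\rangle$ as claimed.

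The main technical hurdle is the first step: the Fermion It\^o products $dF_t\,dF_t^{\dagger} = dt$, $dF_t^{\dagger}\,dF_t = 0$, $dF_t\,d\Lambda_t = dF_t$, $d\Lambda_t\,dF_t^{\dagger} = dF_t^{\dagger}$, and $d\Lambda_t\,dF_t = 0 = dF_t^{\dagger}\,d\Lambda_t$ must each be verified by pushing the $J_t$ factors past the Boson increments and invoking $J_t^2 = 1$; care is needed to track whether $J_t$ sits to the left or right of the Boson differential in every product. A related subtlety is that conditions (\ref{r1})--(\ref{r2}) must be invoked at the operator level, not merely in vacuum expectation, in order for the quantum stochastic integral to vanish by Theorem~1; this is why the theorem demands $\Pi$ to be a genuine self-adjoint operator solution of the algebraic system.
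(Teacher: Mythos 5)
Your proposal is correct and takes essentially the same route as the paper: the same Lyapunov candidate $U_t^*\,\Pi\,U_t$, the same quantum It\^{o} computation in which (\ref{r1}), its adjoint, and (\ref{r2}) kill the $dF_t$, $dF_t^{\dagger}$ and $d\Lambda_t$ coefficients, and the same completion of squares via the Riccati equation (\ref{r}). The only cosmetic differences are that you work with the operator differential $d(U_t^*\,\Pi\,U_t)$ before taking matrix elements and keep the boundary term $-<(\Pi\,U_T+u_T)\,\xi,U_T\,\xi>$ explicit, whereas the paper takes matrix elements from the outset and lets that term cancel in the bookkeeping.
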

 \begin{proof} Let
 \begin{eqnarray}
 \theta_t=<\xi,U_t^*\,\Pi \,U_t\,\xi>\label{r3}.
 \end{eqnarray}
 Then
 \begin{eqnarray}
 d\theta_t=<\xi,d(U_t^* \,\Pi \,U_t)\,\xi>=<\xi,(dU^*_t \, \Pi \,U_t+U_t^* \,\Pi \,dU_t+dU_t^* \,\Pi \,dU_t)\,\xi>
 \label{r4}
 \end{eqnarray}
 which, after replacing $dU_t$ and $dU_t^*$ by (\ref{f1}) and (\ref{f2}) respectively,  and using (\ref{1}), (\ref{2})
 and the  It\^{o} table, becomes
 \begin{eqnarray}
 &d\theta_t=<\xi, U_t^*\,((F^*\,\Pi+\Pi\,F+\Phi^*\,\Pi\,\Phi)\,dt+(\Phi^*\,\Pi+\Pi\,\Psi+\Phi^*\,\Pi\,Z)\,J_t\,dA_t\qquad&
 \label{r5}
 \\&+(\Psi\,\Pi^*+\Pi\,\Phi+Z^*\,\Pi\,\Phi)\,J_t\,dA_t^{\dagger}+(Z^*\,\Pi+\Pi\,Z+Z^*\,\Pi\,Z)\,d\Lambda_t)\,U_t\,\xi>&
 \nonumber\\
 &+<\xi,(u_t^*\,\Pi\,U_t+U_t^*\,\Pi\,u_t)\,dt\,\xi>.
 &\nonumber
 \end{eqnarray}
 and by (\ref{r})-(\ref{r2})
 \begin{eqnarray}
 &d\theta_t=<\xi, U_t^*\,( \Pi^2-X^2)\,U_t\,dt\,\xi>   +<\xi,(u_t^*\,\Pi\,U_t+U_t^*\,\Pi\,u_t)\,dt\,\xi>&
 \label{r6}
 \end{eqnarray}
 By (\ref{r3})
 \begin{eqnarray}
 &\theta_T-\theta_0=<\xi, U_T^*\,\Pi\,U_T\,\xi>-<\xi,\Pi\,\xi>.&\label{r7}
 \end{eqnarray}
 while by (\ref{r6})
 \begin{eqnarray}
 &\theta_T-\theta_0=&\label{r8}\\
&\int_0^T\,(<\xi, U_t^*\,( \Pi^2-X^2)\,U_t\,\xi>   +<\xi,(u_t^*\,\Pi\,U_t+U_t^*\,\Pi\,u_t)\,\xi>)\,
 dt\qquad&\nonumber
 \end{eqnarray}
 By (\ref{r7}) and (\ref{r8})
 \begin{eqnarray}&<\xi, U_T^*\,\Pi\,U_T\,\xi>-<\xi,\Pi\,\xi>=&
 \label{r9}\\&\int_0^T\,(<\xi, U_t^*\,( \Pi^2-X^2)\,U_t\,\xi>
 +<\xi,(u_t^*\,\Pi\,U_t+U_t^*\,\Pi\,u_t)\,\xi>)\,dt.&\nonumber
 \end{eqnarray}
 Thus
 \begin{eqnarray}
 &Q_{\xi,T}(u)=(<\xi, U_T^*\,\Pi\,U_T\,\xi>-<\xi,\Pi\,\xi>)+Q_{\xi,T}(u)&\label{r10}\\&- (<\xi, U_T^*\,\Pi\,U_T\,\xi>-<\xi,\Pi\,\xi>).&\nonumber\end{eqnarray}Replacing the first parenthesis on the right hand side of (\ref{r10}) by (\ref{r9}), and $Q_{\xi,T}(u)$by (\ref{q4}) we obtain after
 cancelations
 \begin{eqnarray}
 Q_{\xi,T}(u)&=&\int_0^T \,(<\xi,(U_t^* \,{\Pi}^2\,U_t+u_t^* \,\Pi \,U_t+U_t^* \,\Pi \,u_t+u_t^* \,u_t)\,\xi>\,
 dt\nonumber\\
&+&<\xi,\Pi\,\xi>\label{r11}
 \\
 &=&\int_0^T \,||(u_t+\Pi\,U_t)\,\xi||^2\,dt+<\xi,\Pi\,\xi>\nonumber
 \end{eqnarray}
 which is clearly minimized by $u_t=-\Pi\,U_t$ and the minimum is $<\xi,\Pi\,\xi>$.
 \end{proof}
 \begin{theorem}
 Let $X$ be a bounded self-adjoint system operator such that the pair ($i\,H$, $X$) is stabilizable
 i.e there exists a bounded system operator $K$ such that $i\,H+KX$ is the generator of an asymptotically stable
 semigroup.  Then, the quadratic  performance functionals
 (\ref{q2})
 and (\ref{q3}) associated with the Fermion flow $\{j_t(X):=U_t^*\,X \,U_t\,,\,t \geq 0\}$  and the reflected
 flow $\{r_t(X):=j_t(X\,J_t)\,,\,t \geq 0\}$,  where $U=\{U_t\,,\,t\geq 0\}$ is the solution of (\ref{4}), are
 minimized by
 \begin{eqnarray}
 &L=\sqrt{2}\,\Pi^{1/2}\,W_1\label{r12}&
 \end{eqnarray}
 and
 \begin{eqnarray}&W=W_2&\label{r13}
 \end{eqnarray}where  $\Pi$ is a positive self-adjoint solution of the ``algebraic Riccati equation''
 \begin{eqnarray}
 &i\,[H,\Pi]+\Pi^2+X^2=0& \label{r14}
 \end{eqnarray}
 and $W_1$, $W_2$ are bounded unitary system operators  commuting with  $\Pi$.  It is known (see \cite{15}) that if the pair ($i\,H$, $X$) is stabilizable, then (\ref{r14}) has a positive self-adjoint solution $\Pi$.  Moreover \begin{eqnarray}&\min_{L,W}\, J_{\xi,T}(L,W)= \min_{L,W}\, R_{\xi,T}(L,W)=<\xi,\Pi\,\xi>&\label{r15}\end{eqnarray} independent of $T$.\end{theorem}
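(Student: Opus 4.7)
The plan is to reduce the minimization of $J_{\xi,T}(L,W)$ and $R_{\xi,T}(L,W)$ to the quantum linear regulator theorem proved just before (the one yielding the algebraic Riccati system (\ref{r})--(\ref{r2})). First, using the unitarity of $U_t, U_T, J_t, J_T$ and the fact that $J_t$ commutes with system operators, the three performance functionals $Q_{\xi,T}(u)$, $J_{\xi,T}(L,W)$ and $R_{\xi,T}(L,W)$ defined in (\ref{q1}), (\ref{q2}) and (\ref{q3}) coincide, so it is enough to minimize $Q_{\xi,T}$.

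Next, I set $u_t := -\Pi\,U_t$ with $\Pi := \tfrac{1}{2}L^*L$. Then
\begin{eqnarray*}
\|u_t\,\xi\|^2 = \tfrac{1}{4}\|L^*L\,U_t\,\xi\|^2, \qquad -\langle u_T\,\xi,\,U_T\,\xi\rangle = \tfrac{1}{2}\|L\,U_T\,\xi\|^2,
\end{eqnarray*}
so that the functional in (\ref{q1}) becomes exactly the functional (\ref{q4}) of the regulator theorem. Comparing (\ref{4}) with (\ref{f1}) under this choice of $\Pi$ forces the identification $F = -iH$, $\Psi = -L^*W$, $\Phi = L$, $Z = W-1$, because the drift $-(iH + \tfrac{1}{2}L^*L)U_t$ of (\ref{4}) decomposes as $F\,U_t + u_t = -iH\,U_t - \Pi\,U_t$.

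The key step is then to verify that the ansatz $L = \sqrt{2}\,\Pi^{1/2}W_1$, $W = W_2$, with $W_1,W_2$ unitary system operators commuting with $\Pi$, turns the three coupled Riccati conditions (\ref{r})--(\ref{r2}) into the single equation (\ref{r14}). A direct computation yields $L^*L = LL^* = 2\Pi$ (so $L$ is normal and the constraint $\Pi = \tfrac{1}{2}L^*L$ is consistent) and $L^*\Pi = \Pi L^*$. From the commutativity of $W_1$ with $\Pi$, we get $L^*\Pi L = 2W_1^*\Pi^{2}W_1 = 2\Pi^2$, so (\ref{r}) collapses to $i[H,\Pi] + \Pi^2 + X^2 = 0$, which is exactly (\ref{r14}). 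Equation (\ref{r1}) reduces to $(L^*\Pi - \Pi L^*)W = 0$ and holds by the commutativity of $L^*$ with $\Pi$ just verified; equation (\ref{r2}) reduces to $W^*\Pi W - \Pi = 0$, which holds because $W = W_2$ commutes with $\Pi$.

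Invoking the preceding regulator theorem then gives the minimum value $\langle \xi, \Pi\,\xi\rangle$, which is independent of $T$, proving (\ref{r15}); the existence of a positive self-adjoint solution $\Pi$ of (\ref{r14}) under stabilizability of $(iH,X)$ is the result cited from \cite{15}. The main difficulty is conceptual rather than computational: spotting the polar-type ansatz $L = \sqrt{2}\,\Pi^{1/2}W_1$ with $W_1$ commuting with $\Pi$ is what simultaneously enforces $\Pi = \tfrac{1}{2}L^*L$, makes $L$ normal, and decouples the $W$-equation into the single commutativity condition $[W,\Pi]=0$, allowing the three coupled Riccati equations to collapse to the single scalar Riccati equation (\ref{r14}) in $\Pi$.
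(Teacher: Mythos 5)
Your proposal is correct and follows essentially the same route as the paper: reduce $J_{\xi,T}$ and $R_{\xi,T}$ to $Q_{\xi,T}$ by unitarity, match (\ref{4}) with (\ref{f1}) via $u_t=-\Pi U_t$, $\Pi=\tfrac{1}{2}L^*L$, $F=-iH$, $\Psi=-L^*W$, $\Phi=L$, $Z=W-1$, and collapse the Riccati system (\ref{r})--(\ref{r2}) to (\ref{r14}). The only (minor) difference is direction: you verify that the ansatz $L=\sqrt{2}\,\Pi^{1/2}W_1$, $W=W_2$ satisfies (\ref{r16})--(\ref{r18}), whereas the paper derives $[L,\Pi]=[W,\Pi]=[L,L^*]=0$ from those equations and thereby obtains the form of $L$ and $W$.
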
\begin{proof}Looking at (\ref{4}) as (\ref{f1}) with
 $u_t=-\frac{1}{2}\,L^*\,L\,U_t$, $F=-i\,H$, $\Psi=-L^*\,W$, $\Phi=L$, and $Z=W-1$,
 (\ref{q4}) is identical to (\ref{q1}). Moreover, equations  (\ref{r})-(\ref{r2}) become
 \begin{eqnarray}
 &i\,[H,\Pi]+L^*\,\Pi\,L-\Pi^2+X^2=0&\label{r16}\\&L^*\,\Pi-\Pi\,L^*\,W+L^*\,\Pi\,(W-1)=0&\label{r17}\\&(W^*-1)\,\Pi+\Pi\,(W-1)+(W^*-1)\,\Pi\,(W-1)=0.\label{r18}&\end{eqnarray}By the self-adjointness of $\Pi$, (\ref{r17}) implies that
 \begin{eqnarray}
 &[L,\Pi]=[L^*,\Pi]=0&\label{r19}
 \end{eqnarray}
 while (\ref{r18}) implies that
 \begin{eqnarray}
 &[W,\Pi]=[W^*,\Pi]=0&.\label{r20}
 \end{eqnarray}
 i.e (\ref{r13}).
 By (\ref{r17}) and the fact that in this case
 \begin{eqnarray}
 &\Pi=\frac{1}{2}\,L^*\,L \mbox{  i.e  }L^*\,L=2\,\Pi&\label{r21}
 \end{eqnarray}

 equation (\ref{r16}) implies (\ref{r14}). Equations (\ref{r19}) and (\ref{r21}) also imply that
 \begin{equation}
 [L,L^*]=0\label{r22}
 \end{equation}
 which implies (\ref{r12}).
 \end{proof}
 
 \end{document}